\newtheorem{assumption}{\textbf{Assumption}}
\newtheorem{lemma}{\textbf{Lemma}}
\newtheorem{definition}{\textbf{Definition}}
\newtheorem{theorem}{\textbf{Theorem}}
\newtheorem{remark}{\textbf{Remark}}
\newtheorem{problem}{\textbf{Problem}}
\newcommand{\T}{^{\mbox{\tiny T}}}
\newcommand{\R}{\mathbb{R}}
\newcommand{\C}{\mathbb{C}}
\newcommand{\eps}{\varepsilon}
\let\leq\leqslant
\let\geq\geqslant
\newenvironment{proof}[1][Proof]%
{\par\noindent\textit{#1:\ }}%
{\hspace*{\fill} \rule{6pt}{6pt}}
\newenvironment{proof*}[1][Proof]%
{\par\noindent\textit{#1:\ }}{}
\DeclareMathOperator{\diag}{diag}
\DeclareMathOperator{\rank}{rank}
\newenvironment{system}[1]%
{\setlength{\arraycolsep}{0.5mm}\left\{ \; \begin{array}{#1}}%
	{\end{array} \right.}
\newenvironment{system*}[1]%
{\setlength{\arraycolsep}{0.5mm} \begin{array}{#1}}%
	{\end{array}}
\begin{document}
	\title{Delayed state synchronization of continuous-time multi-agent systems in the presence of unknown communication delays}
	\author{Zhenwei Liu\aref{neu}, Ali Saberi\aref{wsu},
		Anton A. Stoorvogel\aref{ut}, Rong Li\aref{shfc}}
	
	\affiliation[neu]{College of Information Science and
		Engineering, Northeastern University, Shenyang 110819, China
		\email{liuzhenwei@ise.neu.edu.cn}}
	\affiliation[wsu]{School of Electrical Engineering and Computer
		Science, Washington State University, Pullman, WA 99164, USA
		\email{saberi@eecs.wsu.edu}}
	\affiliation[ut]{Department of Electrical Engineering,
		Mathematics and Computer Science, University of Twente, Enschede, The Netherlands
		\email{A.A.Stoorvogel@utwente.nl}}
	\affiliation[shfc]{School of Statistics and Mathematics, Shanghai Lixin University of Accounting and Finance, Shanghai 201209, China
		\email{lirong@shfc.edu.cn}}
	
	\maketitle
	
	\begin{abstract}
		This paper studies delayed synchronization of continuous-time
		multi-agent systems (MAS) in the presence of unknown nonuniform
		communication delays. A \emph{delay-free} transformation is
		developed based on a communication network which is a directed
		spanning tree, which can transform the original MAS to a new one
		without delays. By using this transformation, we design a static
		protocol for full-state coupling and a dynamic protocol for delayed
		state synchronization for homogeneous MAS via full- and
		partial-state coupling. Meanwhile, the delayed output synchronization
		is also studied for heterogeneous MAS, which is achieved by using a
		low-gain and output regulation based dynamic protocol design via the
		delay-free transformation.
	\end{abstract}
	
	\keywords{Multi-agent systems, Delayed state synchronization, Continuous-time, Communication delays }

	\section{Introduction}
	
	The problem of synchronization among agents in a multi-agent system
	has received substantial attention in recent years, because of its
	potential applications in cooperative control of autonomous vehicles,
	distributed sensor network, swarming and flocking and others. The
	objective of synchronization is to secure an asymptotic agreement on a
	common state or output trajectory through decentralized control
	protocols (see \cite{bai-arcak-wen,lewis-zhang-hengster-movric-das,liu-saberi-stoorvogel-zhang-ijrnc,liu-zhang-saberi-stoorvogel-auto,mesbahi-egerstedt,ren-book,wu-book} and references therein).

	Recently synchronization in a network with time delay has attracted a
	great deal of interest. As clarified in \cite{cao-yu-ren-chen}, we can
	identify two kinds of delay. Firstly there is \emph{communication
		delay}, which results from limitations on the communication between
	agents. Secondly we have \emph{input delay}, which is due to
	computational limitations of an individual agent. Many works have
	focused on dealing with input delay (see e.g.\ \cite{liu-stoorvogel-saberi-zhang-cdc17,liu-zhang-saberi-stoorvogel-ejc,saber-murray2,stoorvogel-saberi-acc,tian,aalst,xiao-wang,zhang-saberi-stoorvogel-cdc15}), but communication delay is much less understood at this moment. In the case of communication delay, only for
	a constant synchronization trajectory do we preserve the
	diffusive nature of the network. This diffusive nature is an
	intrinsic part of the currently available design techniques and
	hence only this case has been studied. Some works in this area can be seen in 
	\cite{ferrari-trecate,lin-jia-auto,tian-liu,tian}. 
	
	References \cite{chopra-tac} and \cite{chopra-spong-cdc06} solved the
	synchronization problem for nonlinear heterogeneous MAS with unknown
	non-uniform constant communication delays. Some other results for non-uniform communication delays can also be found in \cite{lin-jia-auto,munz-papachristodoulou-allgower,
		munz-papachristodoulou-allgower2,tian-liu,xiao-wang-tac}.
	
	On the other hand, the concept of delayed synchronization was introduced in \cite{chopra-tac} and \cite{chopra-spong-cdc06}. Compared with general synchronization problem, delayed synchronization allows a fixed signal lag from parents node to their son node when constant communication delay is considered. It means that there exists a fixed distance (or some other physical quantities) between two agents to keep moving. But, due to the restriction of strongly connected network, the synchronized trajectory must converse to a constant.
	
	In this paper, we study delayed synchronization problems of MAS in the
	presence of unknown communication delays. The communication network is
	assumed to be a directed spanning tree (i.e., it has one root node and
	the other non-root nodes have indegree one). The contribution of this
	paper is threefold:
	\begin{itemize}
		\item A \emph{delay-free} transformation is established to remove the
		effect from unknown communication delays, and obtain a transformed
		MAS without communication delays.
		\item We develop the delayed state synchronization results of
		homogeneous MAS based on the delay-free transformation, and obtain a
		dynamic synchronized trajectory. Static and dynamic protocol designs
		are provided for both full- and partial-state coupling cases
		respectively.
		\item We also develop the delayed output synchronization results of
		heterogeneous MAS. A low-gain and output regulation based dynamic
		protocol design is provided via the delay-free transformation.
	\end{itemize}
	
	We will see that, compared to earlier work, our approach is limited to
	a graph which is a directed spanning tree. However, the intrinsic
	advantage is that the synchronized trajectory is not limited to a
	constant but will follow the trajectory of the root agent.
	
	{\bf Notations and definitions}:	
	Given a matrix $A\in \mathbb{R}^{m\times n}$, $A\T$ and $A^{*}$ denote
	the transpose and conjugate transpose of $A$, respectively while
	$\|A\|$ denotes the induced 2-norm of $A$. A square matrix $A$ is said
	to be Hurwitz stable if all its eigenvalues are in the open left half
	complex plane. $A\otimes B$ depicts the Kronecker product between $A$
	and $B$. $I_n$ denotes the $n$-dimensional identity matrix and $0_n$
	denotes $n\times n$ zero matrix; we will use $I$ or $0$ if the
	dimension is clear from the context.
	
	A \emph{weighted directed graph} $\mathcal{G}$ is defined by a triple
	$(\mathcal{V}, \mathcal{E}, \mathcal{A})$ where
	$\mathcal{V}=\{1,\ldots, N\}$ is a node set, $\mathcal{E}$ is a set of
	pairs of nodes indicating connections among nodes, and
	$\mathcal{A}=[a_{ij}]\in \mathbb{R}^{N\times N}$ is the weighting
	matrix, and $a_{ij}>0$ iff $(j,i)\in \mathcal{E}$. Each pair in
	$\mathcal{E}$ is called an \emph{edge}. A \emph{path} from node $i_k$
	to $i_1$ is a sequence of nodes $\{i_1,\ldots, i_k\}$ such that
	$(i_{j}, i_{j+1})\in \mathcal{E}$ for $j=1,\ldots, k-1$.  A
	\emph{directed tree} is a subgraph (subset of nodes and edges) in
	which every node has exactly one parent node except for one node,
	called the \emph{root}, which has no parent node. In this case, the
	root has a directed path to every other node in the tree.  A
	\emph{directed spanning tree} is a subgraph which is a directed tree
	containing all the nodes of the original graph.  An agent is called a
	\emph{root agent} if it is the root of some directed spanning tree of
	the associated graph.  Let $\Pi_{\mathcal{G}}$ denote the set of all
	root agents for a graph.  For a weighted graph $\mathcal{G}$, a matrix
	$L=[\ell_{ij}]$ with
	\begin{equation}\label{Lij}
	\ell_{ij}=
	\begin{system}{cl}
	\sum_{k=1}^{N} a_{ik}, & i=j,\\
	-a_{ij}, & i\neq j,
	\end{system}
	\end{equation}
	is called the \emph{Laplacian matrix} associated with the graph
	$\mathcal{G}$. In the case where $\mathcal{G}$ has non-negative
	weights, $L$ has all its eigenvalues in the closed right half plane
	and at least one eigenvalue at zero associated with right eigenvector
	$\textbf{1}$.

	\section{Problem formulation}
	
	We will study a MAS consisting of $N$ identical agents:
	\begin{equation}\label{eq1}
	\begin{cases}
	\dot{x}_i(t)=Ax_i(t)+Bu_i(t),\\
	y_i(t)=Cx_i(t),
	\end{cases}
	\end{equation}
	where $x_i(t)\in\mathbb{R}^n$, $u_i(t)\in\mathbb{R}^m$ and
	$y_i(t)\in\mathbb{R}^p$ are the state, input and the output,
	respectively, of agent $i$ for $i=1,\ldots, N$.
	
	\begin{assumption}\label{assume}
		We assume that
		\begin{itemize}
			\item ($A$, $B$, $C$) is stabilizable and detectable.
			\item All eigenvalues of $A$ are in the closed left half complex
			plane.
		\end{itemize}
	\end{assumption}
	
	The communication network provides agent $i$ with the following
	information,
	\begin{equation}\label{eq2}
	\zeta_i(t)=\sum_{j=1}^{N}a_{ij}\left[ y_i(t)-y_j(t-\tau_{ij}) \right]
	\end{equation}
	where $a_{ij}\geq 0$ and $a_{ii}=0$. This communication topology of
	the network can be described by a weighted graph $\mathcal{G}$ with
	weighting matrix $\mathcal{A}=[a_{ij}]$. We can obtain the associated
	Laplacian matrix $L$ via \eqref{Lij}.
	
	Here $\tau_{ij}\in\mathbb{R}^+$ represents an unknown constant
	communication delays from agent $j$ to agent $i$. This communication
	delay implies that it takes $\tau_{ij}$ seconds for agent $j$ to
	transfer its state information to agent $i$. Furthermore, we assume Agent 1 is root of graph in this paper.

	\begin{definition}\label{ungrN}
		For any $\beta>0$, let $\mathbb{G}_{\beta}^N$ denote the set of
		directed graphs with $N$ nodes which are equal to a directed
		spanning tree for which the corresponding Laplacian matrix $L$ is
		lower triangular with the top row identical to zero which has the
		property that $\ell_{ii}\geq\beta$ for $i=2,\ldots,N$ while agent
		$1$ is the root agent. Similarly for any $\alpha>\beta>0$, let
		$\mathbb{G}_{\alpha,\beta}^N$ denote the set of directed graphs with
		$N$ nodes which are equal to a directed spanning tree for which the
		corresponding Laplacian matrix $L$ is lower triangular with the
		first row equal to zero with the property that
		$\beta\leq\ell_{ii}\leq\alpha$ for $i=2,\ldots,N$.
	\end{definition}
	
	\begin{remark}
		Note that any graph which is a directed spanning tree will 
		have a lower triangular Laplacian matrix after a possible reordering
		of the agents.
	\end{remark}
	
	For the graph defined by Definition \ref{ungrN}, we know the
	Laplacian matrix $L$ is lower triangular with the first row identical
	to zero. Therefore, we have
	\[
	L=\begin{pmatrix}
	0 & 0 & 0 & \cdots & 0 \\
	\ell_{21} & \ell_{22} & 0 & \cdots & 0 \\
	\ell_{31} & \ell_{32} & \ell_{33} & \ddots & \vdots \\
	\vdots & \ddots & \ddots & \ddots & 0\\
	\ell_{N1} & \cdots & \ell_{N,N-2} & \ell_{N,N-1} & \ell_{N,N}
	\end{pmatrix}.
	\]
	Since the graph is equal to a directed spanning tree, there are in
	every row (except the first one) exactly two elements unequal to $0$.
	
	Our goal is to achieve delayed state synchronization among
	agents in a MAS, that is
	\begin{equation}\label{synchro}
	\lim_{t\to \infty}\, \left[ x_i(t)-x_j(t-\tau_{ij})\right] = 0,
	\end{equation}
	for all $i,j\in\{1,\ldots,N\}$.

	For the MAS \eqref{eq1}, we formulate delayed state synchronization
	problems as follows.
	
	\begin{problem}\label{prob1}
		Consider a MAS described by agents \eqref{eq1} and \eqref{eq2}
		associated with a directed graph $\mathcal{G}\in\mathbb{G}_{\beta}^N$ is equal to a spanning
		tree, where $\mathbb{G}_{\beta}^N$
		is defined in Definition \ref{ungrN}. The \emph{delayed state
			synchronization problem} with a set of graphs $\mathbb{G}_{\beta}^N$ in the
		presence of unknown, nonuniform, arbitrarily large communication
		delays is to find a distributed static protocol of the type,
		\begin{equation}\label{hocde-state1}  
		u_i(t)=F\zeta_i(t),\quad (i=1,\ldots,N)
		\end{equation}
		for each agent such that \eqref{synchro} is satisfied for all
		$i,j\in\{1,\ldots,N\}$, for any directed graph
		$\mathcal{G}\in \mathbb{G}_\beta^N$ and for any communication delay
		$\tau_{ij}\in\R^+$.
	\end{problem}
	
	\begin{problem}\label{prob2}
		Consider a MAS described by agents \eqref{eq1} and \eqref{eq2}
		associated with a directed graph $\mathcal{G}\in\mathbb{G}_{\alpha,\beta}^N$ is equal to a spanning
		tree, where $\mathbb{G}_{\alpha,\beta}^N$
		is defined in Definition \ref{ungrN}. The \emph{delayed state
			synchronization problem} with a set of graphs $\mathbb{G}_{\alpha,\beta}^N$ in the
		presence of unknown, nonuniform, arbitrarily large communication
		delays is to find a distributed dynamic protocol of the type,
		\begin{equation}\label{hocde-state2}  
		\begin{system}{cl}
		\dot{\chi}_i(t)&=A_c\chi_i(t)+B_c\zeta_i(t),\\
		u_i(t)&=C_c\chi_i(t)+D_c\zeta_i(t),
		\end{system}
		\end{equation}
		for each agent such that \eqref{synchro} is satisfied for all
		$i,j\in\{1,\ldots,N\}$, for any directed graph
		$\mathcal{G}\in \mathbb{G}_\beta^N$ and for any communication delay
		$\tau_{ij}\in\R^+$.
	\end{problem}
	
	\section{Delayed state synchronization for homogeneous MAS with
		communication delays}
	
	In this section, we will give
	delayed state synchronization results based on algebraic Riccati
	equation for full- and partial-state coupling.
	
	\subsection{Full-state coupling}
	
	Firstly, we consider full-state coupling (i.e., $C=I$). We
	define
	\[
	\tilde{x}_i(t)=x_i(t+\bar{\tau}_{i,1})
	\] 
	where $\bar{\tau}_{i,1}$ denotes the sum of delays from agent $i$ to
	the root (agent $1$) based on its path, and
	$\tau_{ij}=\bar{\tau}_{i,1}-\bar{\tau}_{j,1}$.  Then, we have
	\begin{align*}
	\tilde{\zeta}_i(t)&=\zeta_i(t+\bar{\tau}_{i,1})=\sum_{j=1}^{N}a_{ij}\left[ x_i(t+\bar{\tau}_{i,1})-x_j(t+\bar{\tau}_{i,1}-\tau_{ij})\right]\\
	&=\sum_{j=1}^{N}a_{ij}(\tilde{x}_i(t)-\tilde{x}_j(t))
	\end{align*}	
	and $\tilde{u}_i(t)=u_i(t+\bar{\tau}_{i,1})$. 
	Especially, we have $\bar{\tau}_{1,1}=0$, i.e., $\tilde{x}_1(t)=x_1(t)$
	and $\tilde{u}_1(t)=0$ (since $\tilde{\zeta}_1(t)=\zeta_1(t)=0$). Thus, \eqref{eq1}, \eqref{eq2} and \eqref{hocde-state1} yield:
	\begin{equation}\label{sdp-ts1}
	\dot{\tilde{x}}_i(t)=A{\tilde{x}}_i(t)+\sum_{j=1}^{N}\ell_{ij}BF\tilde{x}_j(t)
	\end{equation}
	Thus, our synchronization objective can be expressed as
	\begin{equation}\label{synchron}
	\lim_{t\to \infty} \left[ \tilde{x}_i(t)-\tilde{x}_j(t)\right]=0.
	\end{equation}
	We define
	\[
	{\tilde{x}}(t)=\begin{pmatrix}
	{\tilde{x}}_1(t) \\ \tilde{x}_2(t) \\ \vdots \\ \tilde{x}_N(t)
	\end{pmatrix}.
	\]
	We have 
	\begin{equation}\label{sdp-ts2}
	\dot{\tilde{x}}(t)=\left(I\otimes A + L\otimes (BF)\right)\tilde{x}(t).
	\end{equation}
	This is referred to as our \emph{delay-free transformation}.
	
	Further, let 
	\[
	\eta_1(t)=\tilde{x}_1(t), \text{ and } 
	\eta_i(t)=\tilde{x}_i(t)-\tilde{x}_1(t) \text{ with } i=2,\ldots,N,
	\]
	we have
	\[
	\eta=\begin{pmatrix}
	\eta_1 \\ \eta_2 \\ \vdots \\ \eta_N
	\end{pmatrix}=\begin{pmatrix}
	\tilde{x}_1(t) \\ \tilde{x}_2(t)-\tilde{x}_1(t) \\
	\vdots\\\tilde{x}_N(t)-\tilde{x}_1(t)  
	\end{pmatrix}=(T\otimes I) \tilde{x}(t).
	\]
	for some suitable defined matrix $T$.  Thus, based on the above
	transformation $T$, we obtain new expression of \eqref{sdp-ts2},
	\begin{equation}\label{sdp-ts3}
	\dot{\eta}(t)=\left(I_N\otimes A +L_Q\otimes (BF)\right)\eta(t)
	\end{equation}
	where 
	\begin{align}
	L_Q=TLT^{-1} &= \begin{pmatrix}
	0 & 0 \\ 0 &  L_Q
	\end{pmatrix} \nonumber \\
	&=\begin{pmatrix}
	0 & 0 & 0 & \cdots & 0\\
	0 & \ell_{22} & 0 & \cdots & 0\\
	0 & \ell_{32} & \ell_{33} & \ddots & \vdots \\
	\vdots & \vdots & \ddots & \ddots & 0 \\
	0 & \ell_{N2} & \cdots & \ell_{N,N-1} & \ell_{N,N}
	\end{pmatrix},\label{L_Q}
	\end{align}
	where $L_Q$ is a positive-definite lower triangular matrix.
	
	Obviously, due to the structure of $L_Q$, the synchronization of
	\eqref{sdp-ts2} is equivalent to the asymptotic stability of the
	following $N-1$ subsystems,
	\begin{equation}\label{sdp-ts4}
	\dot{\eta}_i(t)=(A+\ell_{ii}BF)\eta_i(t), \, \ i=2,\ldots,N.
	\end{equation}
	If \eqref{sdp-ts4} is globally asymptotically
	stable for $i=2,\ldots,N$, we see from the above that
	$\eta_i(t)\rightarrow 0$ for $i=2,\ldots,N$. This implies that 
	\[
	\tilde{x}(t) - (T^{-1}\otimes I)\begin{pmatrix} \eta_1(t) \\ 0 \\
	\vdots \\ 0 \end{pmatrix} \rightarrow 0.
	\]
	Note that the first column of $T^{-1}$ is equal to
	the  vector $\mathbf{1}$ and therefore
	\[
	\tilde{x}_i(t) - \eta_1(t) \rightarrow 0
	\]
	for $i=1,\ldots,N$. This implies that we achieve state synchronization.
	
	Conversely, suppose that the network \eqref{sdp-ts4} reaches state
	synchronization. In this case, we shall have
	\[
	\tilde{x}(t) - \mathbf{1}\otimes \tilde{x}_1(t) \to 0
	\]  
	for all initial conditions. Then
	$\eta(t)- (T\mathbf{1})\otimes \tilde{x}_1(t) \to 0$. Since $\mathbf{1}$
	is the first column of $T$, we have
	\[
	T\mathbf{1}=\begin{pmatrix}
	1 \\ 0 \\ \vdots \\ 0
	\end{pmatrix}.
	\]
	Therefore, $\eta(t) - (T\mathbf{1})\otimes \tilde{x}_1(t) \to 0$ implies
	that $\eta_1(t) - \tilde{x}_1(t) \to 0$ and $\eta_i(t)\to 0$ for $i=2,\ldots,N$
	for all initial conditions.
	
	Thus, we obtain the following lemma.
	
	\begin{lemma}\label{thm-tra}
		The MAS \eqref{sdp-ts2} achieves state synchronization if and only
		if the system \eqref{sdp-ts4} is globally asymptotically stable for
		$i=2,3,\ldots,N$. The synchronized trajectory converses to the
		trajectory of the root agent.
	\end{lemma}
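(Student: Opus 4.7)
The plan is to leverage the block-triangular structure of $L_Q$ exhibited in \eqref{L_Q} and the resulting cascade dynamics of \eqref{sdp-ts3}. Because the first row of $L_Q$ is zero, the $\eta_1$ coordinate satisfies $\dot{\eta}_1=A\eta_1$ and in fact coincides with $\tilde{x}_1$, so it represents the trajectory of the root agent. For $i\geq 2$ the equation reads $\dot{\eta}_i=(A+\ell_{ii}BF)\eta_i+\sum_{j=2}^{i-1}\ell_{ij}BF\eta_j$, so \eqref{sdp-ts4} describes exactly the diagonal blocks of this cascade.

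For the sufficiency direction I would argue by induction on $i$. Assuming \eqref{sdp-ts4} is globally asymptotically stable for every $i\geq 2$, the matrix $A+\ell_{22}BF$ is Hurwitz, so $\eta_2\to 0$; and given $\eta_2,\ldots,\eta_{i-1}\to 0$, the $i$-th equation is a Hurwitz linear system forced by a vanishing input, which gives $\eta_i\to 0$ by standard input-to-state arguments. Inverting $T\otimes I$ and using the fact, already recorded in the text, that the first column of $T^{-1}$ equals $\mathbf{1}$, we conclude $\tilde{x}_i(t)-\eta_1(t)\to 0$ for every $i$; this is exactly \eqref{synchron}, and since $\eta_1\equiv \tilde{x}_1$ the synchronized trajectory is the one generated by the root agent.

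For necessity, suppose \eqref{synchron} holds for every initial condition of \eqref{sdp-ts2}. Passing through $T\otimes I$ yields $\eta_i(t)\to 0$ for $i=2,\ldots,N$ from arbitrary initial data of the full coupled system. To extract global asymptotic stability of the \emph{autonomous} system \eqref{sdp-ts4}, I would pick initial conditions with $\tilde{x}_1(0)=\cdots=\tilde{x}_{i-1}(0)=0$ and $\tilde{x}_i(0)$ arbitrary; the cascade structure then forces $\eta_j\equiv 0$ for $j<i$, so $\eta_i$ evolves under precisely \eqref{sdp-ts4} with an arbitrary initial state, and its convergence to zero delivers GAS. The only delicate point is this isolation step in the necessity direction; once it is in place, both implications reduce to the standard fact that a lower-triangular cascade of linear systems is GAS if and only if every diagonal block is Hurwitz.
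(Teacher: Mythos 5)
Your proposal is correct and follows essentially the same route as the paper: transform to the $\eta$-coordinates via $T$, exploit the lower-triangular structure of $L_Q$, and use that the first column of $T^{-1}$ is $\mathbf{1}$ (equivalently $T\mathbf{1}=e_1$) to pass between synchronization and $\eta_i\to 0$ for $i\geq 2$. Your cascade induction for sufficiency and the initial-condition isolation step for necessity merely make explicit what the paper leaves implicit ("we see from the above"), so the two arguments coincide in substance.
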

	
	We let
	\[
	w=\begin{pmatrix} 
	1 & 0 & \cdots & 0 
	\end{pmatrix}
	\]
	be the normalized left eigenvector associated with the zero eigenvalue
	of $L$. Then, from Lemma \ref{thm-tra}, we have
	\begin{equation}\label{sdq-ts5}
	\dot{\eta}_1(t) = A \eta_1(t), \quad \eta_1(0)= \tilde{x}_1(0),
	\end{equation}
	In other words, $\eta_1(0)$ is the initial condition of the root
	agent. Therefore, the synchronized trajectory given by \eqref{sdq-ts5}
	yields that the synchronized trajectory is given by
	\begin{equation}\label{hoct-syn-traj-full}
	x_s(t)=e^{At}\tilde{x}_1(0)=e^{At}x_1(0),
	\end{equation}
	which is the trajectory of the root agent and delay-free. Therefore,
	the root agent is sometimes referred to as the leader.

	\subparagraph*{Protocol design:}
	For full-state coupling,
	we design a parameterized static protocol of the form:
	\begin{equation}\label{hoct-comp3-pole-full1}
	u_i(t)=-\rho B\T P\zeta_i(t),
	\end{equation}
	where $P>0$ is the unique solution of the continuous-time algebraic
	Riccati equation,
	\begin{equation}\label{hoct-are2}
	A\T P+PA-PBB\T P+ Q=0,
	\end{equation}
	with $Q>0$, and $\rho\geq \frac{1}{2\beta}$ with
	$\ell_{ii}\geq\beta>0$.
	
	Using an algebraic Riccati equation we can design a suitable protocol
	provided $(A,B)$ is stabilizable. The synchronization based on
	protocol \eqref{hoct-comp3-pole-full1} is as follows.
	
	\begin{theorem}\label{hoct-theorem-are}
		Consider a MAS described by \eqref{eq1} and \eqref{eq2}. Let any
		$\beta>0$ be given, and consider the set of network graphs
		$\mathbb{G}_{\beta}^N$ with $\ell_{ii}\geq\beta$.
		
		If $(A,B)$ is stabilizable, then the state synchronization problem
		stated in Problem \ref{prob1} with $\mathbf{G}=\mathbb{G}_{\beta}^N$
		is solvable. In particular, the protocol
		\eqref{hoct-comp3-pole-full1} solves the state synchronization
		problem for any graph $\mathcal{G}\in\mathbb{G}_{\beta}^N$ and
		$\tau_{ij}\in \R^+$. Moreover, the synchronized trajectory is given
		by \eqref{hoct-syn-traj-full}.
	\end{theorem}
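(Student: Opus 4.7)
The plan is to reduce the delayed synchronization problem to a finite set of delay-free stability problems and then dispatch those with a standard Riccati/Lyapunov argument. First I would invoke the delay-free transformation already developed in the preceding discussion: substituting $\tilde{x}_i(t)=x_i(t+\bar{\tau}_{i,1})$ converts \eqref{eq1}--\eqref{hocde-state1} into \eqref{sdp-ts2}, and the additional change of coordinates by $T$ puts the closed loop in the triangular form \eqref{sdp-ts3}. By Lemma \ref{thm-tra}, delayed state synchronization of the original MAS is then equivalent to global asymptotic stability of the $N-1$ decoupled subsystems
\begin{equation*}
\dot{\eta}_i(t)=(A-\rho\ell_{ii}BB\T P)\eta_i(t), \quad i=2,\ldots,N,
\end{equation*}
obtained by plugging $F=-\rho B\T P$ into \eqref{sdp-ts4}.

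Next I would verify stability of each of these subsystems using the quadratic Lyapunov candidate $V_i(\eta_i)=\eta_i\T P\eta_i$, where $P>0$ solves the ARE \eqref{hoct-are2}. A direct computation along trajectories gives
\begin{equation*}
\dot{V}_i=\eta_i\T\bigl[A\T P+PA-2\rho\ell_{ii}PBB\T P\bigr]\eta_i
=\eta_i\T\bigl[-Q-(2\rho\ell_{ii}-1)PBB\T P\bigr]\eta_i,
\end{equation*}
where the second equality uses \eqref{hoct-are2}. Because $\ell_{ii}\geq\beta$ and $\rho\geq 1/(2\beta)$, we have $2\rho\ell_{ii}-1\geq 0$, so $\dot{V}_i\leq-\eta_i\T Q\eta_i<0$ for all $\eta_i\neq 0$. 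Hence each subsystem is globally asymptotically stable, uniformly over $\mathcal{G}\in\mathbb{G}_\beta^N$.

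Combining these two steps, Lemma \ref{thm-tra} yields $\tilde{x}_i(t)-\tilde{x}_1(t)\to 0$, i.e.\ $x_i(t+\bar{\tau}_{i,1})-x_1(t)\to 0$. Using $\tau_{ij}=\bar{\tau}_{i,1}-\bar{\tau}_{j,1}$, this is exactly $x_i(t)-x_j(t-\tau_{ij})\to 0$, which is \eqref{synchro}. Moreover, $\eta_1$ satisfies \eqref{sdq-ts5}, so the synchronized trajectory is $x_s(t)=e^{At}x_1(0)$ as claimed in \eqref{hoct-syn-traj-full}.

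The only genuinely delicate step is the Lyapunov inequality, and the design choice $\rho\geq 1/(2\beta)$ is precisely what is needed to absorb the indefinite sign of $(1-2\rho\ell_{ii})PBB\T P$ into the $-Q$ term uniformly across graphs in $\mathbb{G}_\beta^N$; the rest is routine once the delay-free transformation is in place. I do not expect any difficulty from $(A,B)$ being only stabilizable rather than controllable, since Assumption \ref{assume} together with stabilizability is exactly what guarantees existence of a positive definite solution $P$ to \eqref{hoct-are2} with $Q>0$.
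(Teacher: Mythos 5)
Your proposal is correct and follows essentially the same route as the paper: the delay-free transformation plus Lemma \ref{thm-tra} reduce the problem to stability of $A-\rho\ell_{ii}BB\T P$, which both you and the paper establish via the identity $(A-\rho\ell_{ii}BB\T P)\T P+P(A-\rho\ell_{ii}BB\T P)=-Q-(2\rho\ell_{ii}-1)PBB\T P\leq -Q$ using $\rho\geq 1/(2\beta)$ and $\ell_{ii}\geq\beta$. Your write-up merely makes explicit the Lyapunov function and the translation back to the delayed synchronization property, which the paper leaves implicit.
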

	
	\begin{proof}
		For protocol \eqref{hoct-comp3-pole-full1}, we can obtain
		\[
		\tilde{u}_i(t)=-\rho B\T P\tilde{\zeta}_i(t).
		\]
		By using the delay-free transformation, it means that we only need
		to prove that the system
		\begin{equation}
		\dot{z}(t)=(A-\ell_{ii} \rho BB\T P)z(t)
		\end{equation}
		is asymptotically stable for any $\ell_{ii}$ that satisfies
		$\ell_{ii}\geq \beta$.
		
		We observe that
		\begin{align*}
		(A-\ell_{ii} &\rho BB\T P)\T P+P(A-\ell_{ii} \rho BB\T P)\\
		&=-Q-(2\ell_{ii}\rho-1)PBB\T P\\
		&\leq -Q.
		\end{align*}
		Therefore, $(A-\ell_{ii} \rho BB\T P)$ is Hurwitz stable for any
		$\ell_{ii}\geq \beta>0$. Based on Lemma \ref{thm-tra}, the delayed
		state synchronization result can be proved.
	\end{proof}

	\subsection{Partial-state coupling}
	
	In the following, we give the transformation for partial
	state coupling.  Similar to the case of full-state coupling, we have
	the following expression for \eqref{eq1} and \eqref{eq2} by using
	our delay-free transformation,
	\begin{equation}\label{dft-psc1}
	\begin{system}{ll}
	\dot{\tilde{x}}_i(t)=A{\tilde{x}}_i(t)+B\tilde{u}_i(t)\\
	\tilde{\zeta}_i(t)=\sum_{j=1}^{N}
	a_{ij}(\tilde{y}_i(t)-\tilde{y}_j(t)) 
	\end{system}
	\end{equation}
	with $\tilde{x}_i(t)=x_i(t-\bar{\tau}_{i,1})$,
	$\tilde{y}_i(t)=y_i(t-\bar{\tau}_{i,1})$. 
	
	The MAS described by \eqref{eq1} and \eqref{eq2} after implementing
	the dynamic protocol \eqref{hocde-state2} is described by
	\begin{equation}\label{hoct-sys1}
	\begin{system}{ccl}
	\dot{\bar{x}}_i(t) &=&
	{\setlength{\arraycolsep}{2mm}\begin{pmatrix}
		A & BC_c \\ 0 & A_c
		\end{pmatrix}}\bar{x}_i(t)+
	{\setlength{\arraycolsep}{2mm}\begin{pmatrix}
		BD_c \\ B_c
		\end{pmatrix}}\tilde{\zeta}_i(t),\\
	\tilde{y}_i(t) &=&
	{\setlength{\arraycolsep}{2mm}\begin{pmatrix} C & 0\end{pmatrix}}
	\bar{x}_i(t),\\ 
	\tilde{\zeta}_i(t)&=& {\displaystyle \sum_{j=1}^N
		a_{ij}(\tilde{y}_i(t)-\tilde{y}_j(t)),} 
	\end{system}
	\end{equation}
	for $i=1,\ldots,N$, where
	\[
	\bar{x}_i(t)= \begin{pmatrix} \tilde{x}_i(t) \\
	\chi_i(t) \end{pmatrix},
	\qquad \bar{x}(t) = \begin{pmatrix} \bar{x}_1(t) \\ \vdots \\
	\bar{x}_N(t) \end{pmatrix}
	\]
	Define
	\[
	\bar{A}=\begin{pmatrix}
	A & BC_c \\ 0 & A_c
	\end{pmatrix},
	\bar{B}=\begin{pmatrix}
	BD_c \\ B_c
	\end{pmatrix}, 
	\bar{C}=\begin{pmatrix}
	C & 0
	\end{pmatrix}.
	\]
	Then, the overall dynamics of the $N$ agents can be written as
	\begin{equation} \label{hoct-sys1-partial}
	\dot{\bar{x}}(t)= (I_N \otimes \bar{A}+L\otimes \bar{B}\bar{C})\bar{x}(t).
	\end{equation}
	So, this is the delay-free system obtained after our transformation
	for MAS with unknown communication delays via partial state coupling.
	
	The synchronization of \eqref{hoct-sys1-partial} is equivalent to the
	asymptotic stability of the following $N-1$ subsystems,
	\begin{equation}\label{dft-psc4}
	\dot{\bar{\eta}}_i=(\bar{A}+\ell_{ii}\bar{B}\bar{C})\bar{\eta}_i, \,
	\ i=2,\ldots,N. 
	\end{equation}
	
	Similar to Lemma \ref{thm-tra}, we obtain the following lemma for
	partial state coupling.
	
	\begin{lemma}\label{thm-tra-part}
		The MAS \eqref{hoct-sys1-partial} achieves state synchronization if
		and only if the system \eqref{dft-psc4} is globally asymptotically
		stable for $i=2,3,\ldots,N$. The synchronized trajectory converges
		to the trajectory of the root agent of \eqref{hoct-sys1-partial}.
	\end{lemma}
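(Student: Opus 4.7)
The plan is to mirror the proof of Lemma~\ref{thm-tra} essentially verbatim, using the augmented state $\bar{x}_i(t)$ and the augmented matrices $\bar{A}$, $\bar{B}$, $\bar{C}$ in place of $\tilde{x}_i(t)$, $A$, $B$, $BF$. First I would apply the same linear change of coordinates: let $\bar{\eta}_1(t) = \bar{x}_1(t)$ and $\bar{\eta}_i(t) = \bar{x}_i(t) - \bar{x}_1(t)$ for $i = 2,\ldots,N$, so that $\bar{\eta}(t) = (T\otimes I)\bar{x}(t)$ for the same matrix $T$ used earlier. Because the top row of $L$ is identically zero, conjugating by $T\otimes I$ produces the Laplacian $L_Q$ displayed in \eqref{L_Q}, and the overall dynamics \eqref{hoct-sys1-partial} becomes
\[
\dot{\bar{\eta}}(t) = \bigl(I_N\otimes\bar{A} + L_Q\otimes \bar{B}\bar{C}\bigr)\bar{\eta}(t).
\]

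Since the first row and column of $L_Q$ vanish, the coordinate $\bar{\eta}_1$ evolves autonomously according to $\dot{\bar{\eta}}_1=\bar{A}\bar{\eta}_1$, while for $i=2,\ldots,N$ the lower triangular structure of $L_Q$ gives
\[
\dot{\bar{\eta}}_i(t) = (\bar{A}+\ell_{ii}\bar{B}\bar{C})\bar{\eta}_i(t) + \sum_{j=2}^{i-1}\ell_{ij}\bar{B}\bar{C}\,\bar{\eta}_j(t).
\]
For the sufficiency direction, I would induct on $i$: if each subsystem \eqref{dft-psc4} is asymptotically stable and $\bar{\eta}_j(t)\to 0$ for all $j<i$, then $\bar{\eta}_i$ satisfies a Hurwitz linear equation driven by an exponentially decaying input, so $\bar{\eta}_i(t)\to 0$ as well. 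Since $\mathbf{1}$ is the first column of $T^{-1}$, the same argument used for Lemma~\ref{thm-tra} then gives $\bar{x}_i(t)-\bar{\eta}_1(t)\to 0$ for every $i$, which is exactly state synchronization, with the synchronized trajectory being that of the root agent of \eqref{hoct-sys1-partial}.

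For the converse direction, state synchronization of \eqref{hoct-sys1-partial} forces $\bar{\eta}_i(t)\to 0$ for all $i\geq 2$ and all initial conditions. Selecting initial data with $\bar{\eta}_j(0)=0$ for $j<i$ and $\bar{\eta}_i(0)$ arbitrary kills the coupling terms and isolates the diagonal block, showing that $(\bar{A}+\ell_{ii}\bar{B}\bar{C})$ must itself be Hurwitz, i.e.\ \eqref{dft-psc4} is globally asymptotically stable. The only nonroutine step is controlling the triangular coupling in the sufficiency direction, but this reduces to the standard fact that a Hurwitz LTI system driven by an exponentially decaying signal has an exponentially decaying response, so the induction closes without difficulty.
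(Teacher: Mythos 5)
Your proposal is correct and follows essentially the same route as the paper, which proves this lemma simply by repeating the argument given for Lemma~\ref{thm-tra} with $\bar{A},\bar{B}\bar{C}$ in place of $A,BF$: the change of coordinates $\bar{\eta}=(T\otimes I)\bar{x}$, the structure of $L_Q$, and the fact that the first column of $T^{-1}$ is $\mathbf{1}$. In fact you are slightly more careful than the paper, which dismisses the lower-triangular coupling terms with ``obviously,'' whereas you handle them explicitly by induction (sufficiency) and by choosing initial conditions that zero out the coupling (necessity).
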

	
	Meanwhile,  we have 
	\begin{equation}\label{dft-psc5}
	\dot{\bar{\eta}}_1(t) = A \bar{\eta}_1(t),\qquad \bar{\eta}_1(0)= \bar{x}_1(0),
	\end{equation}
	Therefore, the synchronized trajectory given by \eqref{dft-psc5}
	yields that the synchronized trajectory is given by
	\begin{equation}\label{dft-psc6}
	x_s(t)=(I\quad 0)e^{\bar{A}t}\bar{x}_1(0)=e^{At}{x}_1(0).
	\end{equation}

	\subparagraph*{Protocol design:}
	For partial-state coupling, we design a parameterized dynamic
	protocol of the form:
	\begin{equation}\label{hoct-comp3-pole-partial}
	\begin{system}{ccl}
	\dot{\chi}_i(t) &=& (A+KC)\chi_i(t)-K\zeta_i(t) ,\\
	u_i(t) &=& -{\beta}^{-1} B\T {P}_\delta\chi_i(t),
	\end{system}
	\end{equation}
	where $K$ is a matrix such that $A+KC$ is Hurwitz stable,
	${P}_\delta>0$ is the unique solution of the continuous-time algebraic
	Riccati equation,
	\begin{equation}\label{are234r}
	A\T{P}_\delta+{P}_\delta A-{P}_\delta
	BB\T {P}_\delta+ \delta I=0. 
	\end{equation}
	with $\ell_{ii}\geq\beta>0$.
	
	\begin{theorem}\label{hoct-theorem-are-part}
		Consider a MAS described by \eqref{eq1} and \eqref{eq2}. Let any
		$\alpha>\beta>0$ be given, and consider the set of network graphs
		$\mathbb{G}_{\alpha,\beta}^N$ with $\alpha\geq\ell_{ii}\geq\beta$.
		
		If $(A,B)$ is stabilizable and $(A,C)$ is observable, then the state
		synchronization problem stated in Problem \ref{prob2} with
		$\mathbf{G}=\mathbb{G}_{\alpha,\beta}^N$ is solvable. In particular,
		there exists a $\delta^{*}>0$ such that for any
		$\delta\in(0,\delta^{*}]$, the dynamic protocol
		\eqref{hoct-comp3-pole-partial} solves the state synchronization
		problem for any graph $\mathcal{G}\in\mathbb{G}_{\alpha,\beta}^N$
		and $\tau_{ij}\in \R^+$. Moreover, the synchronized trajectory is
		given by \eqref{dft-psc6}.
	\end{theorem}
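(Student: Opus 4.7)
The plan is to invoke Lemma \ref{thm-tra-part} and reduce the delayed state synchronization problem to verifying Hurwitz stability of the $N-1$ subsystem matrices $M_i := \bar A + \ell_{ii}\bar B \bar C$ for $i=2,\ldots,N$. Substituting $A_c = A+KC$, $B_c = -K$, $C_c = -\beta^{-1}B\T P_\delta$, and $D_c = 0$ from the dynamic protocol \eqref{hoct-comp3-pole-partial}, each $M_i$ takes the explicit block form
\[
M_i = \begin{pmatrix} A & -\beta^{-1}BB\T P_\delta \\ -\ell_{ii}KC & A+KC \end{pmatrix},
\]
so the task reduces to showing that $M_i$ is Hurwitz for every $\ell_{ii}\in[\beta,\alpha]$.

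I would then apply the $\ell_{ii}$-dependent similarity $T_i = \bigl(\begin{smallmatrix} I & 0 \\ -\ell_{ii}I & I\end{smallmatrix}\bigr)$, which brings $M_i$ into the form
\[
T_i M_i T_i^{-1} = \begin{pmatrix} A - \ell_{ii}\beta^{-1}BB\T P_\delta & -\beta^{-1}BB\T P_\delta \\ \ell_{ii}^{2}\beta^{-1}BB\T P_\delta & A + KC + \ell_{ii}\beta^{-1}BB\T P_\delta \end{pmatrix}.
\]
The upper-left diagonal block inherits exactly the argument used for Theorem \ref{hoct-theorem-are}: from the low-gain ARE \eqref{are234r} one obtains
$(A-\ell_{ii}\beta^{-1}BB\T P_\delta)\T P_\delta + P_\delta(A-\ell_{ii}\beta^{-1}BB\T P_\delta) \leq -(2\ell_{ii}/\beta - 1)P_\delta BB\T P_\delta - \delta I \leq -\delta I$ for every $\ell_{ii}\geq\beta$. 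The lower-right diagonal block is a perturbation of the Hurwitz matrix $A+KC$ by $\ell_{ii}\beta^{-1}BB\T P_\delta$, and both off-diagonal blocks are of the same order; all three perturbations are $O(\|P_\delta\|)$.

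The key step is to construct a block-diagonal Lyapunov function $V(p,e) = p\T P_\delta p + \mu\, e\T Q\, e$, where $Q>0$ solves $(A+KC)\T Q + Q(A+KC) = -I$, and to absorb the cross terms by Young's inequality. After the routine estimates one obtains a bound of the form $\dot V \leq -c_1(\delta)\|p\|^2 - c_2(\delta,\mu)\|e\|^2$; one then chooses $\mu$ first and the threshold $\delta^{*}$ second so that both $c_1$ and $c_2$ remain strictly positive for every $\delta\in(0,\delta^{*}]$. The main obstacle is securing a single $\delta^{*}$ that works uniformly in $\ell_{ii}\in[\beta,\alpha]$: the lower bound $\beta$ is what guarantees the margin $-\delta I$ in the upper-left block (exactly as in the full-state case), while the upper bound $\alpha$ is essential to control the factor $\ell_{ii}^{2}$ in the lower-left off-diagonal term so that the coupling does not overwhelm the diagonal decay. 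This is precisely why Problem \ref{prob2} is posed on $\mathbb G_{\alpha,\beta}^N$ and why the low-gain property $\|P_\delta\|\to 0$ as $\delta\to 0$ is indispensable.

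Finally, once $M_i$ is Hurwitz for $i=2,\ldots,N$, Lemma \ref{thm-tra-part} delivers the delayed state synchronization, and the form of the synchronized trajectory \eqref{dft-psc6} follows from the same reasoning that produced \eqref{dft-psc5}.
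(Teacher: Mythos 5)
Your outline reproduces the paper's proof: the reduction via Lemma \ref{thm-tra-part} to Hurwitz stability of $\bar A+\ell_{ii}\bar B\bar C$, your similarity $T_i$ (which is, up to a diagonal scaling of the second block by $-\ell_{ii}^{-1}$, the same change of variables as the paper's passage to \eqref{hoct-sys-cl2} followed by $e=\tilde x-\tilde\chi$), the ARE inequality for the block $A-\ell_{ii}\beta^{-1}BB\T P_\delta$, and a weighted Lyapunov function $p\T P_\delta p+\mu\, e\T Q e$ --- the paper's $V=4M^2V_1+2V_2$ is exactly such a weighted sum, with $\beta\le\ell_{ii}\le\alpha$ used in the same two places you indicate. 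So in structure this is the same argument, not a different route.

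However, the step you label ``routine'' is the actual crux, and as you describe it it does not go through. If the two cross terms are bounded by $\mathrm{const}\cdot\|B\T P_\delta\|\,\|e\|\,\|p\|$ (or $\|B\T P_\delta\|^2\|e\|\|p\|$) and absorbed by Young's inequality against the margins $-\delta\|p\|^2$ and $-\mu\|e\|^2$, the lower-left block forces $\mu\lesssim \delta/\|B\T P_\delta\|^2$ while the upper-right block forces $\mu\gtrsim \|B\T P_\delta\|^4/\delta$; since the ARE gives $\|B\T P_\delta\|^2\le 2\|A\|\|P_\delta\|+\delta$ and $\|P_\delta\|/\delta\to\infty$ in general (already for a double-integrator agent one has $\|B\T P_\delta\|^2\sim 2\sqrt{\delta}$), no $\mu$ fixed before $\delta^*$ satisfies the first requirement for all small $\delta$, and for small $\delta$ the two requirements are even mutually incompatible. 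The repair is precisely the device in the paper's proof: both cross terms contain the factor $B\T P_\delta p$, so introduce $v=B\T P_\delta p$, estimate the cross terms by $\|e\|\,\|v\|$, and absorb them against the $-\|v\|^2$ margin coming from the term $-(2\ell_{ii}/\beta-1)P_\delta BB\T P_\delta$ that you correctly wrote down from \eqref{are234r} and then discarded. With that margin retained, the condition becomes roughly $(\theta(\delta)+\mu M)^2\lesssim\mu$ with $M\lesssim\alpha\beta^{-1}\|QB\|$ and $\theta(\delta)\to0$, and your intended order --- fix $\mu$ (equivalently the paper's weights), then choose $\delta^*$, uniformly over $\ell_{ii}\in[\beta,\alpha]$ --- works. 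The remaining parts of your argument (Hurwitzness for $i=2,\ldots,N$, invoking Lemma \ref{thm-tra-part}, and the trajectory \eqref{dft-psc6}) are fine.
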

	
	\begin{proof}
		For dynamic protocol \eqref{hoct-comp3-pole-partial}, we have
		\[
		\begin{system}{ccl}
		\dot{\tilde{\chi}}_i (t)&=& (A+KC)\tilde{\chi}_i(t)-K\tilde{\zeta}_i(t) ,\\
		\tilde{u}_i(t) &=& -{\beta}^{-1} B\T {P}_\delta\tilde{\chi}_i(t),
		\end{system}	
		\]
		by using our delay-free transformation.
		
		From \eqref{eq1}, \eqref{eq2}, and protocol
		\eqref{hoct-comp3-pole-partial}, it means that we only need to prove
		that the system
		\begin{equation}\label{hoct-sys-cl2}
		\begin{system}{ccl}
		\dot{\tilde{x}}(t) &=& A\tilde{x}(t)-\ell_{ii}\beta^{-1} BB\T
		P_{\delta}\tilde{\chi}(t),\\ 
		\dot{\tilde{\chi}}(t) &=& (A+KC)\tilde{\chi}(t)-KC\tilde{x}(t),
		\end{system}
		\end{equation}
		is asymptotically stable for $\alpha\geq\ell_{ii}\geq \beta$.
		
		Define $e(t)=\tilde{x}(t)-\tilde{\chi}(t)$. The system \eqref{hoct-sys-cl2} can be
		rewritten in terms of $x$ and $e$ as
		\begin{equation}\label{hoct-sys-app-2}
		\begin{system}{ccl}
		\dot{\tilde{x}}(t) &=& (A - \ell_{ii}\beta^{-1} B B\T
		P_\delta)\tilde{x}(t)+\ell_{ii}\beta^{-1} BB\T P_\delta e(t) \\ 
		\dot{e}(t) &=& (A+KC+\ell_{ii} \beta^{-1} BB\T P_\delta)e(t)-\ell_{ii}
		\beta^{-1} BB\T P_\delta \tilde{x}(t). 
		\end{system}
		\end{equation}
		Since $\ell_{ii}\geq\beta$, we have
		\begin{multline*}
		(A-\ell_{ii}\beta^{-1} B B\T P_\delta)\T
		P_\delta+P_\delta(A-\ell_{ii} \beta^{-1}BB\T P_\delta) \\
		\leq -\delta I - P_\delta BB\T P_\delta.
		\end{multline*}
		Define $V_1=\tilde{x}\T(t) P_\delta \tilde{x}(t)$ and
		$v=-B\T P_\delta \tilde{x}(t)$. We can derive that 
		\begin{equation*}
		\dot{V}_1 \leq -\delta \|
		\tilde{x}(t)\|^2-\|v\|^2+\theta(\delta)\|e(t)\| \|v\|, 
		\end{equation*}
		where 
		\[
		\theta(\delta)= \ell_{ii} {\beta}^{-1}\left\| B\T P_\delta
		\right\|. 
		\]
		Clearly, $\theta(\delta)\to 0$ as $\delta\to 0$.
		
		Let $Q$ be the positive definite solution of the Lyapunov equation,
		\[
		(A+KC)\T Q+Q(A+KC)=-2I.
		\]
		Since $P_\delta\to 0$ and $\ell_{ii}$ is bounded (we have
		$\ell_{ii}<\alpha$), there exists a $\delta_1$ such that for all
		$\delta \in(0,\delta_1]$,
		\[
		(A+KC+\ell_{ii}\beta^{-1} BB\T P_\delta )\T
		Q+Q(A+KC+\ell_{ii}\beta^{-1} BB\T P_\delta)\leq -I.
		\]
		Define $V_2=e\T(t) Qe(t)$. We get
		\[
		\dot{V}_2 \leq -\|e(t)\|^2+M\|e(t)\| \|v\|
		\]
		where 
		\[
		M= 2 \ell_{ii}{\beta}^{-1}\|QB\|.
		\]
		Define $V=4 M^2 V_1+2V_2$. Then
		\begin{multline*}
		\dot{V}\leq -4 M^2\delta \| \tilde{x}(t)\|^2-2\|e(t)\|^2-4M^2\|v\|^2\\
		+(4M^2\theta(\delta)+2M)\|e(t)\|\|v\| .
		\end{multline*}
		There exists a $\delta^*\leq \delta_1$ such that
		$4M^2\theta(\delta)\leq 2M$ for all $\delta\in(0,\delta^*]$.  Hence
		for a $\delta\in(0,\delta^*]$,
		\[
		\dot{V}\leq -4 M^2\delta \| \tilde{x}(t)\|^2-\|e(t)\|^2-(\|e(t)\|-2M\|v\|)^2.
		\]
		We conclude that the system \eqref{hoct-sys-cl2} is asymptotically
		stable for $\alpha\geq\ell_{ii}\geq \beta$. Based on Lemma
		\ref{thm-tra-part}, the delayed state synchronization result can be
		proved.
	\end{proof}
	
	\section{Delayed output synchronization for heterogeneous MAS with
		communication delays} 
	
	In this section, we consider the following heterogeneous MAS,
	\begin{equation}\label{hemas-eq1}
	\begin{system*}{cl}
	\dot{x}_i(t) &= A_ix_i(t)+B_iu_i(t),\\
	y_i(t) &= C_ix_i(t),
	\end{system*}
	\end{equation}
	where $x_i(t)\in\mathbb{R}^{n_i}$, $u_i(t)\in\mathbb{R}^{m_i}$ and
	$y_i(t)\in\mathbb{R}^p$ are the state, input and the output,
	respectively, of agent $i$ for $i=1,\ldots, N$. Meanwhile, the
	communication network provides agent $i$ with form of \eqref{eq2}
	including time delay $\tau_{ij}$. Similarly, we can obtain \emph{a
		delay-free transformation} for \eqref{hemas-eq1} by letting
	$\tilde{x}_i(t)=x_i(t+\bar{\tau}_{i,1})$,
	$\tilde{y}_i(t)=y_i(t+\bar{\tau}_{i,1})$,
	$\tilde{u}_i(t)=u_i(t+\bar{\tau}_{i,1})$, and
	$\tilde{\zeta}_i(t)=\zeta_i(t+\bar{\tau}_{i,1})$. Meanwhile, MAS
	\eqref{hemas-eq1} satisfies the following assumption.
	
	\begin{assumption}\label{assume2}
		We assume that
		\begin{itemize}
			\item ($A_i$, $B_i$, $C_i$) is stabilizable and detectable.
			\item All eigenvalues of $A_i$ are in the closed left half complex
			plane.
			\item ($A_i$, $B_i$, $C_i$, 0) is right-invertible.
			\item ($A_i$, $B_i$, $C_i$, 0) has no invariant zeros in the closed
			right-half complex plane that coincide with the eigenvalues of
			$A_1$ (the system matrix of the root agent).
		\end{itemize}
	\end{assumption}
	
	Thus, we can transform \eqref{eq1}, \eqref{eq2} and
	\eqref{hocde-state1} as
	\begin{equation}\label{hemas-eq2}
	\begin{system}{ll}
	\dot{\tilde{x}}_i(t)=A_i{\tilde{x}}_i(t)+B_i\tilde{u}(t)\\
	\tilde{y}_i(t)=C_i{\tilde{x}}_i(t)\\
	\tilde{u}_i(t)=\sum_{j=1}^{N}\ell_{ij}F_i\tilde{x}_j(t).
	\end{system}
	\end{equation}
	Given the model \eqref{hemas-eq2} and graph which is a directed
	spanning tree, all earlier approaches can also be applied to a
	heterogeneous MAS. Here we will give
	design scheme to obtain the delayed state synchronization results
	based on \cite{grip-yang-saberi-stoorvogel-automatica}.
	
	Since the graph is equal to a directed spanning tree, it only has a single root
	which is Agent $1$. Moreover, $u_1=0$.  In this section, our  goal is
	achieve delayed output synchronization  
	\begin{equation}\label{osyn}
	\lim_{t\to \infty}\, \left[ y_i(t)-y_j(t-\tau_{ij}) \right] = 0.
	\end{equation}
	
	For the heterogeneous MAS \eqref{hemas-eq1}, we formulate delayed
	output synchronization problem as follows. 	
	\begin{problem}\label{prob3}
		Consider a MAS described by agents \eqref{hemas-eq1} and \eqref{eq2}
		associated with a directed graph
		$\mathcal{G}\in\mathbb{G}_{\beta}^N$ is equal to a spanning tree where $\mathbb{G}_{\beta}^N$
		is defined in Definition \ref{ungrN}. The \emph{delayed output
			synchronization problem} given the set of graph
		$\mathbb{G}_{\beta}^N$ in the presence of unknown, nonuniform,
		arbitrarily large communication delays is to find a distributed
		dynamic protocol of the type \eqref{hocde-state2}, for each agent
		such that \eqref{osyn} is satisfied for all $i\in\{1,\ldots,N\}$,
		for any directed graph $\mathcal{G}\in \mathbb{G}_{\beta}^N$ and for
		any communication delay $\tau_{ij}\in\R^+$.
	\end{problem}
	
	Then, we let 
	\[
	e_i(t)=\tilde{y}_i(t)-\tilde{y}_1(t),
	\]
	and \eqref{hemas-eq2} can be rewritten as
	\begin{equation}\label{hemas-eq3}
	\begin{system*}{rl}
	\begin{pmatrix}
	\dot{\tilde{x}}_i(t)\\\dot{\tilde{x}}_1(t)
	\end{pmatrix}&=\begin{pmatrix}
	A_i&0\\0&A_1\end{pmatrix}\begin{pmatrix}
	\tilde{x}_i(t) \\ \tilde{x}_1(t)
	\end{pmatrix}+\begin{pmatrix}
	B_i \\ 0
	\end{pmatrix} \tilde{u}_i(t)\\
	e_i(t)&=\begin{pmatrix}
	C_i&-C_1
	\end{pmatrix}\begin{pmatrix}
	\tilde{x}_i(t) \\ \tilde{x}_1(t)
	\end{pmatrix}
	\end{system*}
	\end{equation}
	
	Then, let $O_i$ be the observability matrix,
	\[
	O_i=\begin{pmatrix}
	C_i&-C_1\\
	C_iA_i&-C_1A_1\\
	\vdots&\vdots\\
	C_iA_i^{n_i+n_1-1}&-C_1A_1^{n_i+n_1-1}
	\end{pmatrix}.
	\]
	Let $q_i$ denote the dimension of the null space of $O_i$, and define
	$k_i=n_1-q_i$. Next, define $\Lambda_i^u\in\mathbb{R}^{n_i\times q_i}$
	and $\Phi_i^u\in\mathbb{R}^{n_1\times q_i}$ such that
	\[
	O_i\begin{pmatrix}
	\Lambda_i^u \\ \Phi_i^u
	\end{pmatrix}=0,\qquad\rank\begin{pmatrix}
	\Lambda_i^u \\ \Phi_i^u
	\end{pmatrix}=q_i.
	\]
	
	Because ($A_i$, $C_i$) is observable,
	$\Lambda_i^u$ and $\Phi_i^u$ have full column rank. Next, we define
	$\Lambda_i^c$ and $\Phi_i^c$ such that
	$\Lambda_i=[\Lambda_i^u\ \Lambda_i^c]\in\mathbb{R}^{n_i\times n_i}$
	and $\Phi_i=[\Phi_i^u\ \Phi_i^c]\in\mathbb{R}^{n_1\times n_1}$ are
	nonsingular. Thus, we define a new state
	$\bar{x}_i(t)\in\mathbb{R}^{n_i+k_i}$ as
	\[
	\bar{x}_i(t)=\begin{pmatrix}
	\tilde{x}_i(t)-\Lambda_iM_i\Phi_i^{-1}\tilde{x}_1(t)\\-N_i\Phi_i^{-1}\tilde{x}_1(t)
	\end{pmatrix},
	\]
	where
	\[
	M_i=\begin{pmatrix}
	I_{q_i} & 0 \\ 0 & 0
	\end{pmatrix}\in\mathbb{R}^{n_i\times n_1}, \text{ and }N_i=\begin{pmatrix}
	0 & I_{k_i}
	\end{pmatrix}\in\mathbb{R}^{k_i\times n_1}.
	\]
	Based on this new state variable $\bar{x}_i(t)$, we can transform
	\eqref{hemas-eq3} as
	\begin{equation}\label{hemas-eq4}
	\begin{system*}{rl}
	\dot{\bar{x}}_i(t)&=\bar{A}_i\bar{x}_i(t)+\bar{B}_i\tilde{u}_i(t)=\begin{pmatrix}
	A_i&\bar{A}_{12}^i\\0&\bar{A}_{22}^i\end{pmatrix}\bar{x}_i(t)+\begin{pmatrix}
	B_i\\0
	\end{pmatrix}\tilde{u}_i(t)\\
	e_i(t)&=\bar{C}_i\bar{x}_i(t)=\begin{pmatrix}
	C_i&-\bar{C}_{2}^i
	\end{pmatrix}\bar{x}_i(t)
	\end{system*}
	\end{equation} 
	Further, we define ${\phi}_i(t)=\Xi_i\bar{x}_i(t)$ with
	\[
	\Xi_i=\begin{pmatrix}
	\bar{C}_i\\\bar{C}_i\bar{A}_i \\ \vdots \\ \bar{C}_i\bar{A}_i^{\bar{n}-1}
	\end{pmatrix},
	\]
	where $\bar{n}\geq n_i+n_1$. Note that $\Xi_i$ is not necessarily a
	square matrix; however, due to observability of ($\bar{A}_i$,
	$\bar{C}_i$), $\Xi_i$ is injective, which implies that $\Xi_i\T\Xi_i$
	is nonsingular. Meanwhile, we obtain a new expression of
	\eqref{hemas-eq4},
	\begin{equation}\label{hemas-eq5}
	\begin{system*}{l}
	\dot{\phi}_i(t)=(A_o+K_o^i)\phi_i(t)+B_o\tilde{u}_i(t)\\
	e_i(t)=C_o\phi_i(t)
	\end{system*}
	\end{equation}
	where
	\[
	A_o=\begin{pmatrix} 0 & I_{p(\bar{n}-1)} \\ 0 & 0
	\end{pmatrix}\in\mathbb{R}^{p\bar{n}\times p\bar{n}},\ C_o=\begin{pmatrix}
	I_p & 0
	\end{pmatrix}\in\mathbb{R}^{p\times p\bar{n}},
	\]
	$B_o=\Xi_i\bar{B}_i$, and $K_o^i=\begin{pmatrix} 0 \\ G_i
	\end{pmatrix}$ with
	$G_i=\bar{C}_i\bar{A}_i^{\bar{n}}(\Xi\T\Xi)^{-1}\Xi\T$. 
	
	Thus, we will design a dynamic protocol of the form
	\begin{equation}\label{hemas-eq6}
	\begin{system*}{l}
	\dot{\hat{\phi}}_i(t)=(A_o+K_o^i)\hat{\phi}_i(t) + B_o\tilde{u}_i(t) +
	H_\eps Q_\eps C_o\T(\tilde{\zeta}(t) \\
	\hspace*{4.7cm}- \sum_{j=1}^N\ell_{ij}C_o\hat{\phi}_j(t))\\ 
	\hat{\bar{x}}_i(t)=(\Xi\T\Xi)^{-1}\Xi\T\hat{\phi}_i(t)\\
	\tilde{u}_i(t)=F_i\hat{\bar{x}}_i(t)
	\end{system*}
	\end{equation} 
	where
	\[
	F_i=\begin{pmatrix}
	K_i&\Gamma_i-K_i\Pi_i
	\end{pmatrix}
	\]
	$K_i$ is chosen such that $A_i+B_iK_i$ is Hurwitz stable. $\Gamma_i$
	and $\Pi_i$ satisfy the following regulator equations,
	\begin{equation}\label{hemas-eq7}
	\begin{system*}{l}
	\Pi_i\bar{A}_{22}^i=A_i\Pi_i+\bar{A}_{12}^i+B_i\Gamma_i\\
	C_i\Pi_i=\bar{C}_2^i
	\end{system*}
	\end{equation}
	$H_\eps=\diag(I_p\eps^{-1},I_p\eps^{-2},\ldots,
	I_p\eps^{-\bar{n}})\in\mathbb{R}^{p\bar{n}\times p\bar{n}} $,
	$Q_\eps>0$ is the unique solution of the following algebraic Riccati
	equation
	\begin{equation}\label{hemas-eq8}
	(A_o+K_\eps)Q_\eps+Q_\eps(A_o+K_\eps)\T-2\beta Q_\eps C_o\T
	C_oQ_\eps+I_{p\bar{n}}=0 
	\end{equation} 
	with $0<\beta\leq\ell_{ii}$ for $i=2,\ldots N$,
	\[
	K_\eps=\begin{pmatrix}
	0 \\ \eps^{\bar{n}+1}KH_\eps
	\end{pmatrix}\in\mathbb{R}^{p\bar{n}\times p\bar{n}},
	\]
	$K\in\mathbb{R}^{p\times p\bar{n}}$ is chosen matrix. And
	($A_o+K_\eps$, $C_o$) is always observable. Meanwhile, $Q_\eps\to0$
	when $\eps\to0$.
	
	By designing the dynamic protocol \eqref{hemas-eq6}, the following
	theorem can be obtained.
	
	\begin{theorem}\label{hemas-thm}
		Consider a MAS described by agents \eqref{hemas-eq1} and \eqref{eq2}
		satisfying Assumption \ref{assume2}.
		
		The delayed output synchronization problem stated in Problem
		\ref{prob3} is solvable for the set of graphs
		$\mathbb{G}_{\beta}^{N}$. In particular, protocol \eqref{hemas-eq6}
		solves the state synchronization problem for any graph
		$\mathcal{G}\in\mathbb{G}_{\beta}^N$ and any $\tau_{ij}\in \R^+$.
	\end{theorem}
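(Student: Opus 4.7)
The plan is to apply the delay-free transformation from the homogeneous sections to the heterogeneous MAS \eqref{hemas-eq1} and then analyze the resulting delay-free closed loop via a high-gain observer combined with an output regulator, sweeping through the agents in the triangular order determined by $L$. After the shift $\tilde{x}_i(t)=x_i(t+\bar{\tau}_{i,1})$, the signal $\tilde{\zeta}_i(t)=\sum_j\ell_{ij}\tilde{y}_j(t)$; using the Laplacian identity $\sum_j\ell_{ij}=0$ together with $e_j=\tilde{y}_j-\tilde{y}_1$ and $e_1=0$, this simplifies to $\tilde{\zeta}_i=\sum_{j=2}^N\ell_{ij}C_o\phi_j$. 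Agent $1$ satisfies $\tilde{\zeta}_1=0$ and $\tilde{u}_1=0$, so $\dot{\tilde{x}}_1=A_1\tilde{x}_1$ is autonomous and remains bounded by Assumption \ref{assume2}. Inverting the shift shows that $\lim_{t\to\infty}e_i(t)=0$ for every $i\ge2$ is equivalent to the delayed output synchronization condition \eqref{osyn}.

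First I would establish solvability of the regulator equations \eqref{hemas-eq7} for each $i\ge 2$: the eigenvalues of $\bar{A}_{22}^i$ are contained in those of $A_1$, and Assumption \ref{assume2} rules out invariant zeros of $(A_i,B_i,C_i,0)$ coinciding with them, so standard output-regulation theory supplies $(\Pi_i,\Gamma_i)$. Together with a stabilizing $K_i$ for $(A_i,B_i)$, the feedback $F_i=\begin{pmatrix}K_i & \Gamma_i-K_i\Pi_i\end{pmatrix}$ would drive $e_i\to 0$ if $\bar{x}_i$ were directly available. It therefore remains to show that $\hat{\bar{x}}_i\to\bar{x}_i$, i.e., that $\tilde{\phi}_i:=\phi_i-\hat{\phi}_i\to 0$.

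Subtracting \eqref{hemas-eq6} from \eqref{hemas-eq5} and using $\tilde{\zeta}_i-\sum_j\ell_{ij}C_o\hat{\phi}_j=\sum_{j=2}^N\ell_{ij}C_o\tilde{\phi}_j$ yields the cascade
\[
\dot{\tilde{\phi}}_i=(A_o+K_o^i-\ell_{ii}H_\eps Q_\eps C_o\T C_o)\tilde{\phi}_i-H_\eps Q_\eps C_o\T\sum_{j=2}^{i-1}\ell_{ij}C_o\tilde{\phi}_j,
\]
which is lower triangular in the agent index thanks to the structure of $L$. Introducing the high-gain coordinates $\xi_i=H_\eps^{-1}\tilde{\phi}_i$, the block-shift structure of $A_o$ gives $H_\eps^{-1}A_oH_\eps=\eps^{-1}A_o$ and $C_oH_\eps=\eps^{-1}C_o$, so after time rescaling the diagonal dynamics reduces to $\dot{\xi}_i=(A_o+K_\eps-\ell_{ii}Q_\eps C_o\T C_o)\xi_i+O(\eps)\xi_i$. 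The Riccati equation \eqref{hemas-eq8} combined with $\ell_{ii}\ge\beta$ furnishes a common Lyapunov function $\xi_i\T Q_\eps^{-1}\xi_i$ that is negative definite along the diagonal part uniformly in $\ell_{ii}\ge\beta$, as follows from $(A_o+K_\eps-\ell_{ii}Q_\eps C_o\T C_o)Q_\eps+Q_\eps(A_o+K_\eps-\ell_{ii}Q_\eps C_o\T C_o)\T\le -I$. A triangular Lyapunov aggregation $V=\sum_{i=2}^N\mu_i\xi_i\T Q_\eps^{-1}\xi_i$ with sufficiently fast-growing weights $\mu_i$ dominates the cross terms $\ell_{ij}C_o\xi_j$ with $j<i$, so there exists $\eps^*>0$ such that each $\tilde{\phi}_i\to 0$ exponentially whenever $\eps\in(0,\eps^*]$.

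With $\hat{\bar{x}}_i\to\bar{x}_i$, the closed loop \eqref{hemas-eq4} driven by $F_i\hat{\bar{x}}_i$ is an exponentially small perturbation of the ideal regulator, and hence $e_i\to 0$ for $i=2,\ldots,N$; reversing the delay-free time shift returns \eqref{osyn}. The principal obstacle will be the cascade high-gain analysis in the third paragraph: choosing a single $\eps^*$ that simultaneously stabilizes all $N-1$ diagonal observer blocks while absorbing the lower-triangular cross-agent couplings, uniformly in the admissible Laplacian parameters $\ell_{ii}\ge\beta$, is precisely where the interplay of the Riccati design \eqref{hemas-eq8}, the growing Lyapunov weights, and the property $Q_\eps\to 0$ as $\eps\to 0$ becomes indispensable.
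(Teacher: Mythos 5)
Your proposal follows essentially the same route as the paper: the delay-free transformation, a two-step separation argument consisting of the high-gain distributed observer analyzed in the scaled coordinates $\xi_i=H_\eps^{-1}\tilde{\phi}_i$ with the Riccati design \eqref{hemas-eq8} and $\ell_{ii}\geq\beta$, and an output-regulation state feedback whose regulator equations \eqref{hemas-eq7} are solvable because the eigenvalues of $\bar{A}_{22}^i$ lie among those of $A_1$ and Assumption \ref{assume2} excludes coinciding invariant zeros (the paper makes this explicit via the Rosenbrock matrix rank and right-invertibility). The only difference is cosmetic: where you build a weighted Lyapunov aggregation $\sum_i\mu_i\xi_i\T Q_\eps^{-1}\xi_i$ to absorb the cross-agent couplings, the paper simply exploits that the error dynamics are block lower triangular, so asymptotic stability of each diagonal block $A_o+K_\eps-\ell_{ii}Q_\eps C_o\T C_o-K_\eps^i$ (with the agent-dependent perturbation dominated for small $\eps$) already suffices.
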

	
	\begin{proof}
		To achieve output regulation our design includes two steps: observer
		design and state feedback design.
		
		Step 1) Observer design.
		
		Let $L_Q$ be the matrix obtained by removing the first row and
		column of $L$ as already used in \eqref{L_Q}. Clearly, $L_Q$ is a
		lower triangular matrix and all its eigenvalues are greater than
		$\beta$, i.e., $\ell_{ii}\geq\beta$.
		
		Let $\tilde{\phi}_i(t)=\phi_i(t)-\hat{\phi}_i(t)$ for $i=2,\ldots,N$, then we
		have
		\begin{align*}
		\dot{\tilde{\phi}}_i(t)&=(A_o+K_o^i)\tilde{\phi}_i(t)-H_\eps Q_\eps
		C_o\T(\tilde{\zeta}(t)-\sum_{j=2}^N\ell_{ij}C_o\hat{\phi}_j(t))\\ 
		&=(A_o+K_o)\tilde{\phi}_i(t)-(K_o-K_o^i)\tilde{\phi}_i(t) \\
		&\hspace*{4cm}-\sum_{j=2}^N\ell_{ij}H_\eps Q_\eps C_o\T C_o\tilde{\phi}_j (t)
		\end{align*}
		with $K_o=[0, K\T]\T$.
		
		Then, we set $\xi_i(t)=\eps^{-1}H_\eps^{-1}\tilde{\phi}_i(t)$, we have
		\begin{equation}
		\eps\dot{\xi}_i(t)=(A_o+K_\eps)\xi_i(t)-K_\eps^i\xi_i(t)-\sum_{j=2}^N\ell_{ij}
		Q_\eps C_o\T C_o\xi_j (t)
		\end{equation}
		where
		\[
		K_\eps^i=\begin{pmatrix} 0\\\eps^{\bar{n}+1}(K_o-K_o^i)H_\eps
		\end{pmatrix}. 	
		\] 
		Since $\bar{L}$ is a positive lower triangular matrix and the graph
		is diverge, we have
		\begin{align*}
		\eps\dot\xi_2(t)&=(A_o+K_\eps-\ell_{22}Q_\eps C_o\T C_o)\xi_2(t)-K_\eps^2\xi_2(t)\\
		\eps\dot\xi_j(t)&=(A_o+K_\eps-\ell_{jj}Q_\eps C_o\T
		C_o)\xi_j(t)-K_\eps^j\xi_j(t)\\
		&\hspace*{4cm}-\ell_{ij}Q_\eps C_o\T C_o\xi_i (t)
		\end{align*}
		with $i<j$. That means we just need to prove
		$A_o+K_\eps-\ell_{ii}Q_\eps C_o\T C_o-K_\eps^i$ is asymptotically
		stable.
		
		Thus, based on \eqref{hemas-eq8} and $\ell_{ii}\geq\beta$, we have
		\begin{align}
		Q_\eps^{-1}(A_o&+K_\eps-\ell_{ii}Q_\eps C_o\T C_o-K_\eps^i)\nonumber\\
		&+(A_o+K_\eps-\ell_{ii}Q_\eps C_o\T C_o-K_\eps^i)\T Q_\eps^{-1}\nonumber\\
		=&Q_\eps^{-1}\left[-I_{p\bar{n}}-2(\ell_{ii} -\beta)Q_\eps C_o\T
		C_oQ_\eps\right.\nonumber\\ 
		&\qquad\qquad\qquad\left.-K_\eps^iQ_\eps-Q_\eps
		(K_\eps^i)\T\right]Q_\eps^{-1}\nonumber\\ 
		\leq&Q_\eps^{-1}\left[-I_{p\bar{n}}-K_\eps^iQ_\eps-Q_\eps
		(K_\eps^i)\T\right]Q_\eps^{-1}. 
		\end{align} 
		Since $K_\eps^i$ is a high-order term of $\eps$, there exist a
		$\eps^*$ such that
		\[
		2\|K_\eps^iQ_\eps\|\leq1.
		\]
		It means we can obtain
		$A_o+K_\eps-\ell_{ii}Q_\eps C_o\T C_o-K_\eps^i$ is asymptotically
		stable for any $\eps<\eps^*$.
		
		Thus, we obtain $\xi_i(t)\to0$ as $t\to\infty$. That is to say, we have
		$\phi_i(t)\to\hat{\phi}_i(t)$. Finally, we have
		\[
		\hat{\bar{x}}_i(t)\to\bar{x}_i(t), \text{ as } t\to\infty.
		\]
		
		Step 2). State feedback designing to solve output regulation.
		
		The observer designing implies $\tilde{u}_i(t)=F_i{\bar{x}}_i(t)$ for
		$t\to\infty$, i.e., control signal converses to a state feedback
		protocol.
		
		Meanwhile, we need to achieve $\lim_{t\to \infty}e_i(t)=0$, it can be
		considered as an output regulation problem. From \eqref{hemas-eq4},
		we define $\bar{x}(t)=[(\bar{x}_1^i(t))\T, (\bar{x}_2^i(t))\T]\T$. Thus, we
		have exosystem
		\[
		\dot{\bar{x}}_2^i(t)=\bar{A}_{22}^i\bar{x}_2^i(t)
		\]
		and the regulation system
		\[
		\dot{\bar{x}}_1^i(t)=A_i\bar{x}_1^i(t)+\bar{A}_{12}^i\bar{x}_2^i(t)+B_i\tilde{u}_i(t).
		\]
		Since ($A_i$, $B_i$) is stabilizable and the eigenvalues of
		$\bar{A}_{22}^i$ are in closed right-half plane, we know protocol
		$\tilde{u}_i(t)=F_i{\bar{x}}_i$ can solve this regulation problem
		based on \cite[Theorem 2.3.1]{saberi-stoorvogel-sannutia}, if the
		regulator equation \eqref{hemas-eq7} is solvable. From
		\cite[Corollary 2.5.1]{saberi-stoorvogel-sannutia}, the regulator
		equations are solvable if, for each $\lambda$ that is an eigenvalue
		of $\bar{A}_{22}^i$, the rank of Rosenbrock system matrix
		\[
		\rank\begin{pmatrix}
		A_i-\lambda&B_i\\C_i&0
		\end{pmatrix}=n_i+p.
		\]
		The Rosenbrock system matrix has normal rank $n_i + p$ due to
		rightinvertibility of the quadruple ($A_i$, $B_i$, $C_i$, 0) (see
		\cite[Property 3.1.6]{h2book}). Since this quadruple has no
		invariant zeros coinciding with eigenvalues of $A_1$ and the
		eigenvalues of $\bar{A}_{22}^i$ are a subset of the eigenvalues of
		$A_1$, it follows that the rank of the Rosenbrock system matrix is
		equal to the normal rank for each $\lambda$ that is an eigenvalue of
		$\bar{A}_{22}^i$.
		
		Thus, we can achieve the delayed output synchronization by
		$\lim_{t\to \infty}e_i(t)=0$, i.e., achieve \eqref{osyn}.
	\end{proof}
	
	\bibliographystyle{plain}
	\bibliography{referenc}
	
\end{document}